\providecommand{\U}[1]{\protect\rule{.1in}{.1in}}
\providecommand{\U}[1]{\protect\rule{.1in}{.1in}}
\newtheorem{theorem}{Theorem}
\newtheorem{lemma}[theorem]{Lemma}
\newenvironment{proof}[1][Proof]{\noindent\textbf{#1.} }{\ \rule{0.5em}{0.5em}}
\begin{document}

\title{A quantum polylog algorithm for non-normal maximal cyclic hidden subgroups in
the affine group of a finite field}
\author{Nolan Wallach\\University of California, San Diego\\nwallach@ucsd.edu}
\maketitle

\begin{abstract}
We give an algorithm to solve the quantum hidden subgroup problem for maximal
cyclic non-normal subgroups of the affine group of a finite field (if the
field has order $q$ then the group has order $q(q-1)$) with probability
$1-\varepsilon$ with (polylog) complexity $O(\log(q)^{R}\log(\varepsilon
)^{2})$ where $R<\infty.$

\end{abstract}

\section{Introduction}

The purpose of this paper is to describe and prove an algorithm that solves
the quantum hidden subgroup problem for maximal cyclic non-normal subgroups of
the affine group of a finite field (if the field has order $q$ then the group
has order $q(q-1)$) with probability $1-\varepsilon$ with (polylog) complexity
$O(\log(q)^{R}\log(\varepsilon)^{2})$ where $R<\infty$ is independent of $q$
(more detail on the size of $R$ can be found in sections 5 and 6).There has
been a great deal of research on this problem with emphasis on the case when
$q$ is a prime (see most notably [M,R,R,S]). Here $q$ is general (that is, a
power of a prime). To us the most interesting case is that of $q=2^{n}$.

Our method involves the fact that the group has exactly one irreducible
representation that is not one dimensional (its dimension is $q-1$and was also
a key player in [M,R,R,S]). Associated with this representation is a wavelet
transform. These wavelets are the finite analogs of the original wavelets of
Morlet and Grossmann [GM]. In this paper we use that fact that the wavelets
are determined up to phase as the unique unit fixed points of the subgroups
for which we are searching. In the first two sections we give a fairly self
contained exposition of this theory. The other ingredient is elementary number
theory. We use the basic theory of Gauss sums and give enough detail to
explain the part that is pertinent (c.f. [Ros]).

We thank David Meyer for his help and encouragement in the evolution of the
results of this paper. Especially for the lessons he gave to this novice in
quantum mechanics. This work began seven years ago while Meyer and the author
were partially supported by a Darpa grant to study quantum wavelets.

\section{The $ax+b$ groups and their representations}

Let $F$ be a finite field and let $F^{\times}=F-\{0\}$. We will consider the
group%
\[
G=\left\{  \left[
\begin{array}
[c]{cc}%
a & b\\
0 & 1
\end{array}
\right]  |a\in F^{\times},b\in F\right\}  .
\]
this is the group of all affine transformations of $F$. As a set $G$ is
$F^{\times}\times F$. Let $q$ be the cardinality of $F$ then $q=p^{n}$ with
$p$ a prime. Also $F^{\times}$ is a cyclic group of order $p^{n}-1$. Fix a
cyclic element $u\in F^{\times}$ and write $\zeta=e^{\frac{2\pi i}{q-1}}$.
Then $F^{\times}$ has $q-1$ one dimensional characters, $\chi_{x}$, $x\in
F^{\times}$, defined by $\chi_{u^{j}}(u^{l})=\zeta^{jl}$. We will also use the
notation $\chi_{j}$ for $\chi_{u^{j}}$. We extend these to characters of $G$
by%
\[
\chi_{x}\left(  \left[
\begin{array}
[c]{cc}%
a & b\\
0 & 1
\end{array}
\right]  \right)  =\chi_{x}(a).
\]
In addition, $G$ has an irreducible representation of dimension $q-1$. which
we will now describe. Let $L^{2}(F)$ be the space of all functions from $F$ to
$%
\mathbb{C}
$ with inner product%
\[
\left\langle f|g\right\rangle =\sum_{x\in F}\overline{f(x)}g(x).
\]
We define
\[
\pi\left(  \left[
\begin{array}
[c]{cc}%
a & b\\
0 & 1
\end{array}
\right]  \right)  f(x)=f\left(  \frac{x-b}{a}\right)  .
\]
Then $(\pi,L^{2}(F))$ defines a unitary representation of $G$. The constant
functions form an invariant subspace. Their orthogonal complement, $L_{0}%
^{2}(F)$, is the space of all functions $f\in L^{2}(F)$ such that
\[
\sum_{x\in G}f(x)=0.
\]
This representation is irreducible and since $(q-1)^{2}+(q-1)=q(q-1)$, which
is the order of $G$, we see that we have described up to equivalence all
irreducible unitary representations of $G$. We write $\pi_{0}(g)$ for
$\pi(g)_{|L_{0}^{2}(F)}.$We will take $L^{2}(F^{\times})$ to be the subspace
of $L^{2}(F)$ orthogonal to the delta function at $0$.

We take as our Hilbert space $L^{2}(G)$ and as the computational basis the set
of delta functions $\left\vert g\right\rangle $ for $g\in G$. Since $G$ is set
theoretically $F^{\times}\times F$ we have $L^{2}(G)$ is the tensor product of
$L^{2}(F^{\times})$ and $L^{2}(F)$. We take in both cases the computational
basis to be the delta functions. That is $\left\vert x\right\rangle
=\delta_{x}$ and $\delta_{x}(y)=1$ if $x=y$ and $0$ otherwise. We take our
computational basis to be the elements%
\[
\left\vert \left[
\begin{array}
[c]{cc}%
u^{j} & x\\
0 & 1
\end{array}
\right]  \right\rangle =\left\vert u^{j},x\right\rangle =\left\vert
u^{j}\right\rangle \otimes\left\vert x\right\rangle ,j=0,...,q-2,x\in F.
\]
This implies that the left regular representation, $(L,L^{2}(G))$ of $G$ is
given by $\pi_{1}(x)\otimes\pi(x)$ on $L^{2}(F^{\times})\otimes L^{2}(F)$ with
$\pi(x)=\pi\left(  \left[
\begin{array}
[c]{cc}%
\alpha & \beta\\
0 & 1
\end{array}
\right]  \right)  $ as above and$\pi_{1}(\alpha)f(z)=f(\alpha^{-1}z)$ for
$f\in L^{2}(F^{\times})$. To prove this simple fact we note that if
\[
x=\left[
\begin{array}
[c]{cc}%
\alpha & \beta\\
0 & 1
\end{array}
\right]  ,\text{ }%
\]
then%
\[
\text{ }x^{-1}=\left[
\begin{array}
[c]{cc}%
\alpha^{-1} & -\alpha^{-1}\beta\\
0 & 1
\end{array}
\right]
\]
and as usual
\[
L_{x}f(y)=f(x^{-1}y).
\]
So if
\[
y=\left[
\begin{array}
[c]{cc}%
a & b\\
0 & 1
\end{array}
\right]
\]
then%
\[
L_{x}f(y)=f\left(  \left[
\begin{array}
[c]{cc}%
\alpha^{-1}a & \frac{b-\beta}{\alpha}\\
0 & 1
\end{array}
\right]  \right)  .
\]
We also note that
\[
\pi\left(  \left[
\begin{array}
[c]{cc}%
a & b\\
0 & 1
\end{array}
\right]  \right)  \left\vert x\right\rangle =\left\vert ax+b\right\rangle .
\]

\section{A class of maximal subgroups and the corresponding wavelets}

For each $b\in F$ we consider the cyclic subgroup $C_{b}$ of order $q-1$
consisting of the powers of
\[
\left[
\begin{array}
[c]{cc}%
u & b\\
0 & 1
\end{array}
\right]  =\left\vert u,b\right\rangle .
\]
We want an efficient algorithm for the hidden subgroup problem involving these
groups. That is, we have a set $X$ with $q$ elements (say $X=F)$. \ We assume
that we have a function $f:G\rightarrow X$ that has the property that
$f(cg)=f(g)$ for $c\in C_{b}$ and $f$ takes exactly $q$ values. Our problem is
to determine $b$. We note that%
\[
\left[
\begin{array}
[c]{cc}%
u & b\\
0 & 1
\end{array}
\right]  ^{k}=\left[
\begin{array}
[c]{cc}%
u^{k} & (1+u+...+u^{k-1})b\\
0 & 1
\end{array}
\right]  =\left[
\begin{array}
[c]{cc}%
u^{k} & \frac{1-u^{k}}{1-u}b\\
0 & 1
\end{array}
\right]  .
\]
This implies

\begin{lemma}
If $b\neq0$ then if $\left\vert x,y\right\rangle ,\left\vert x^{\prime
},y^{\prime}\right\rangle \in C_{b}$ and $y=y^{\prime}$ then $x=x^{\prime}$.
\end{lemma}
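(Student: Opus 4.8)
The plan is to exploit the explicit power formula displayed just above the statement, which expresses every element of $C_b$ in terms of the single parameter $k$ and thereby makes the relation between the two coordinates completely transparent.

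First I would record that, by that formula, a general element of $C_b$ has the form
\[
\left\vert u^{k},\ \frac{1-u^{k}}{1-u}b\right\rangle ,\qquad k=0,1,\dots ,q-2.
\]
Writing such an element as $|x,y\rangle$ with $x=u^{k}$, its second coordinate satisfies $y=\frac{b}{1-u}(1-x)$. Here the denominator $1-u$ is nonzero: since $u$ generates the cyclic group $F^{\times}$ of order $q-1\geq 2$, we have $u\neq 1$.

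The key step is simply to invert this relation. Because $b\neq 0$ by hypothesis, the coefficient $\frac{b}{1-u}$ is a nonzero element of $F$, so the affine map $x\mapsto y$ is a bijection of $F$ onto itself, with inverse $x=1-\frac{1-u}{b}\,y$. Consequently the first coordinate of any element of $C_b$ is determined by its second coordinate. Applying this to $|x,y\rangle$ and $|x',y'\rangle$ in $C_b$ with $y=y'$ gives $x=1-\frac{1-u}{b}\,y=1-\frac{1-u}{b}\,y'=x'$, which is exactly the assertion.

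Since the whole argument reduces to solving one linear equation over $F$, there is essentially no obstacle; the only point requiring care is that both denominators are nonzero, and this is guaranteed by the hypothesis $b\neq 0$ together with $u\neq 1$. (In the degenerate case $q=2$ the subgroup $C_b$ is trivial and the statement holds vacuously, so one may assume $q\geq 3$ throughout.)
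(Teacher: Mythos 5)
Your proof is correct and takes essentially the same approach as the paper: the paper states the lemma as an immediate consequence (``This implies'') of the power formula showing every element of $C_b$ has the form $\left(u^{k},\frac{1-u^{k}}{1-u}b\right)$, and your inversion of the affine relation $y=\frac{b}{1-u}(1-x)$, valid exactly because $b\neq 0$ and $u\neq 1$, is precisely that implication spelled out. One small inaccuracy in your closing parenthetical: when $q=2$ (so $u=1$ and the geometric-sum formula degenerates) the subgroup $C_b$ with $b\neq 0$ has order $2$ rather than being trivial, but the lemma still holds there, trivially, since $F^{\times}=\{1\}$ forces every first coordinate to equal $1$.
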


We also note that all of these groups are conjugate. Indeed%
\[
C_{b}=\left[
\begin{array}
[c]{cc}%
1 & \frac{b}{u-1}\\
0 & 1
\end{array}
\right]  C_{0}\left[
\begin{array}
[c]{cc}%
1 & \frac{-b}{u-1}\\
0 & 1
\end{array}
\right]  .
\]
This implies that%
\[
\pi\left(  \left[
\begin{array}
[c]{cc}%
u & b\\
0 & 1
\end{array}
\right]  \right)  f(x)=f(u^{-1}(x-\frac{b}{1-u})+\frac{b}{1-u}).
\]

We note that if $k\neq0$ then if $\phi\in L^{2}(F)$ and
\[
\pi\left(  \left[
\begin{array}
[c]{cc}%
u & 0\\
0 & 1
\end{array}
\right]  \right)  \phi=\zeta^{k}\phi
\]
then up to scalar $\phi$ is equal to%
\[
\phi_{k.0}(x)=\left\{
\begin{array}
[c]{c}%
\frac{1}{\sqrt{q-1}}\zeta^{-kj}\text{ if }x=u^{j}\\
0\text{ if }x=0
\end{array}
\right.  .
\]
We note that if $k=0$ then $\phi_{0,0}$ is invariant but so is $\left\vert
0\right\rangle $. In all cases we set for $v\in F,$ $\phi_{k,v}(x)=\phi
_{k,0}(x-v)$ =%
\[
\phi_{k,v}(x)=\phi_{k,0}(x-v)=\pi\left(  \left[
\begin{array}
[c]{cc}%
1 & v\\
0 & 1
\end{array}
\right]  \right)  \phi_{k,0}(x)
\]
We also note that if $k\neq0$ then $\phi_{k,v}\in L_{0}^{2}(F)$. If $k=0$ we
set
\[
\psi_{0}=\frac{\phi_{0,0}-\sqrt{q-1}\left\vert 0\right\rangle }{\sqrt{q}}\in
L_{0}^{2}(F).
\]
We set%
\[
\psi_{v}=\pi_{0}\left(  \left[
\begin{array}
[c]{cc}%
1 & v\\
0 & 1
\end{array}
\right]  \right)  \psi_{0}%
\]
then the calculations above imply that if $c=\left[
\begin{array}
[c]{cc}%
u & b\\
0 & 1
\end{array}
\right]  $ then

\begin{lemma}
$\pi_{0}(c)\psi_{\frac{b}{1-u}}=\psi_{\frac{b}{1-u}}$ and $\pi_{0}%
(c)\phi_{k,\frac{b}{1-u}}=\zeta^{k}\phi_{k,\frac{b}{1-u}}$ and if $\phi\in
L_{0}^{2}(F)$ and $\pi_{0}(c)\phi=\phi$ (resp.$\pi_{0}(c)\phi=\zeta^{k}\phi$
with $k\neq0)$ then $\phi=z\psi_{\frac{b}{1-u}}$ with $z\in%
\mathbb{C}
$ (resp. $\phi=z\phi_{k,\frac{b}{1-u}}$). Furthermore, if we measure $\psi
_{v}$ in the computational basis it collapses to $\left\vert v\right\rangle $
with probability $\frac{q-1}{q}$.
\end{lemma}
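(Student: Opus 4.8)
The plan is to reduce every assertion to the already-understood case $b=0$ by conjugating with a translation. Write $t_v=\begin{bmatrix} 1 & v \\ 0 & 1\end{bmatrix}$ and $c_{0}=\begin{bmatrix} u & 0 \\ 0 & 1\end{bmatrix}$, and set $v_{0}=\frac{b}{1-u}$. A direct multiplication of the three matrices gives $c=t_{v_{0}}\,c_{0}\,t_{v_{0}}^{-1}$, hence $\pi_{0}(c)=\pi_{0}(t_{v_{0}})\,\pi_{0}(c_{0})\,\pi_{0}(t_{v_{0}})^{-1}$. Since by definition $\psi_{v_{0}}=\pi_{0}(t_{v_{0}})\psi_{0}$ and $\phi_{k,v_{0}}=\pi(t_{v_{0}})\phi_{k,0}$, applying $\pi_{0}(c)$ to these vectors and cancelling the inner $\pi_{0}(t_{v_{0}})^{-1}\pi_{0}(t_{v_{0}})$ reduces the two eigenvalue claims to the identities $\pi_{0}(c_{0})\psi_{0}=\psi_{0}$ and $\pi(c_{0})\phi_{k,0}=\zeta^{k}\phi_{k,0}$. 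The second is the defining property of $\phi_{k,0}$; the first holds because $\pi(c_{0})\phi_{0,0}=\phi_{0,0}$ and $\pi(c_{0})|0\rangle=|0\rangle$, so $\pi(c_{0})$ fixes $\psi_{0}=(\phi_{0,0}-\sqrt{q-1}\,|0\rangle)/\sqrt{q}\in L_{0}^{2}(F)$.

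For the uniqueness statements I would again conjugate: if $\phi\in L_{0}^{2}(F)$ and $\pi_{0}(c)\phi=\zeta^{k}\phi$, then $\psi:=\pi_{0}(t_{v_{0}})^{-1}\phi$ satisfies $\pi_{0}(c_{0})\psi=\zeta^{k}\psi$ and still lies in $L_{0}^{2}(F)$, so it suffices to identify the $\zeta^{k}$-eigenspace of $\pi_{0}(c_{0})$. The key point is that $\pi(c_{0})f(x)=f(u^{-1}x)$ is the permutation operator of multiplication by $u$, which acts as a single $(q-1)$-cycle on $F^{\times}$ and fixes $0$. Hence on $L^{2}(F)$ its spectrum is simple on the $F^{\times}$-block---the eigenvalue $\zeta^{k}$ occurs exactly once, with eigenvector $\phi_{k,0}$---plus the extra eigenvalue $1$ carried by $|0\rangle$. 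For $k\neq0$ one has $\zeta^{k}\neq1$, so the eigenspace in $L^{2}(F)$ is already the line through $\phi_{k,0}\in L_{0}^{2}(F)$, whence $\phi=z\phi_{k,v_{0}}$ after translating back.

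The case $k=0$ is the one place that needs genuine care, and I expect it to be the main obstacle. Here the eigenvalue-$1$ space of $\pi(c_{0})$ in $L^{2}(F)$ is two-dimensional, spanned by $\phi_{0,0}$ and $|0\rangle$, so the collapse to a line only occurs after intersecting with $L_{0}^{2}(F)$. Computing $\sum_{x}\phi_{0,0}(x)=\sqrt{q-1}$ and $\sum_{x}|0\rangle(x)=1$, the requirement that $a\phi_{0,0}+c\,|0\rangle$ lie in $L_{0}^{2}(F)$ forces $c=-a\sqrt{q-1}$, which pins the vector down to a scalar multiple of $\psi_{0}$; applying $\pi_{0}(t_{v_{0}})$ yields $\phi=z\psi_{v_{0}}$.

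Finally, for the measurement probability I would use that $\psi_{v}=\pi_{0}(t_{v})\psi_{0}$ merely shifts amplitudes, so $\psi_{v}(x)=\psi_{0}(x-v)$ and in particular $\psi_{v}(v)=\psi_{0}(0)$. The probability of collapsing to $|v\rangle$ is $|\langle v|\psi_{v}\rangle|^{2}=|\psi_{v}(v)|^{2}=|\psi_{0}(0)|^{2}$. Since $\phi_{0,0}(0)=0$, one reads off $\psi_{0}(0)=-\sqrt{q-1}/\sqrt{q}$, giving probability $\frac{q-1}{q}$ as claimed; the same ingredients ($\|\phi_{0,0}\|=1$, $\langle\phi_{0,0}|0\rangle=0$) also verify $\|\psi_{0}\|=1$, so each $\psi_{v}$ is a genuine state.
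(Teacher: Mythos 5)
Your proposal is correct and takes essentially the same route as the paper: the paper asserts the lemma follows from ``the calculations above,'' namely the conjugation $C_{b}=t_{v_{0}}C_{0}t_{v_{0}}^{-1}$ together with the previously stated eigenvector characterization of $\pi(c_{0})$ (multiplication by $u$ acting as a $(q-1)$-cycle on $F^{\times}$ plus the fixed point $0$), which is exactly your reduction to the case $b=0$. Your write-up merely fills in the details the paper leaves implicit --- the simple spectrum on the cycle block, the $k=0$ case handled by intersecting the two-dimensional fixed space with $L_{0}^{2}(F)$, and the computation $|\psi_{v}(v)|^{2}=\frac{q-1}{q}$.
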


\begin{lemma}
If $H\subset G$ is a maximal subgroup that is non-normal and cyclic then
$H=C_{b}$ for some $b\in F$.
\end{lemma}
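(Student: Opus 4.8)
The plan is to exploit the semidirect-product structure $G=N\rtimes T$, where $N=\{(1,b):b\in F\}$ is the translation subgroup (normal, isomorphic to the additive group of $F$, of order $q$) and $T=\{(a,0):a\in F^{\times}\}=C_{0}$ is the cyclic point-stabilizer of $0\in F$, of order $q-1$; note $\gcd(q,q-1)=1$. Let $H$ be a maximal non-normal cyclic subgroup. Since $G/N\cong F^{\times}$ is abelian, every subgroup of $G$ containing $N$ is normal, so $H$ cannot contain $N$. Because $N$ is normal, $HN$ is a subgroup with $H\subsetneq HN\subseteq G$, and maximality forces $HN=G$. From $|H|\,|N|=|HN|\,|H\cap N|$ I then get $|H|=(q-1)\,|H\cap N|$.

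The key structural step is to show $H\cap N=\{1\}$. I would first observe that $H\cap N$ is normalized by $H$ and, since $N$ is abelian, also by $N$, hence by $HN=G$; thus $H\cap N$ is a normal subgroup of $G$ contained in $N$. Now any subgroup of $N$ is an $\mathbb{F}_{p}$-subspace $V\subseteq F$, and one computes that conjugation by $(u,0)$ sends $(1,c)$ to $(1,uc)$, so a subspace normal in $G$ must be invariant under multiplication by $u$. Since $u$ generates $F^{\times}$, the orbit of any nonzero vector under multiplication by $u$ is all of $F^{\times}$, so the only $u$-invariant subspaces are $\{0\}$ and $F$. As $H$ does not contain $N$, we conclude $H\cap N=\{1\}$, whence $|H|=q-1$ and the projection $(a,b)\mapsto a$ restricts to a bijection of $H$ onto $F^{\times}$.

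It remains to identify $H$ as some $C_{b}$, and here I would invoke the cyclic hypothesis. Writing $H=\langle(a,b)\rangle$, bijectivity of the projection forces $a$ to generate $F^{\times}$; in particular $a\neq1$, so $(a,b)$ has the unique fixed point $v=\frac{b}{1-a}$ on $F$. Every power of $(a,b)$ also fixes $v$, so $H\subseteq\mathrm{Stab}(v)$; but $\mathrm{Stab}(v)\cong F^{\times}$ has order $q-1=|H|$, giving $H=\mathrm{Stab}(v)$. Finally $\mathrm{Stab}(v)=\langle(u,v(1-u))\rangle=C_{v(1-u)}$ by the very definition of $C_{b}$, so $H=C_{b'}$ with $b'=v(1-u)$.

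The main obstacle is the middle step: verifying that $G$ has no nontrivial proper normal subgroup inside $N$. This is precisely the irreducibility of the $F^{\times}$-action on $F$ that underlies the irreducibility of $\pi_{0}$ established earlier, so the essential content of the representation theory reappears here in elementary group-theoretic form. I would expect the remaining steps (the order count, the common-fixed-point argument, and the final recognition of $\mathrm{Stab}(v)$ as a $C_{b}$) to be routine once this irreducibility is in hand.
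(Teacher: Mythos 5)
Your proof is correct, but it takes a genuinely different (and longer) route than the paper's, which occupies only a few lines. Both arguments share the same endgame: the cyclic generator of $H$ has linear part $\neq 1$, hence fixes a unique point $v\in F$, so $H$ lies in the stabilizer of $v$, which is the conjugate of $C_{0}$ by the translation by $v$, i.e.\ a group of the form $C_{b'}$. The difference is everything around that observation. The paper splits on whether the generator is a translation: if it is, maximality forces $H$ to be the full translation subgroup $N$, which is normal, a contradiction; if it is not, the inclusion $H\subseteq C_{b'}$ is upgraded to equality by invoking maximality a second time, since $C_{b'}$ is proper. You instead spend your effort proving $|H|=q-1$: non-normality gives $N\not\subseteq H$, maximality gives $HN=G$, and the key step that $H\cap N$ is normal in $G$ combined with irreducibility of the $u$-action on $(F,+)$ gives $H\cap N=\{1\}$; the equality $H=\mathrm{Stab}(v)$ then follows from an order count rather than from maximality. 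Your route buys three things: it isolates exactly where maximality enters (only to force $HN=G$), it proves the stronger structural fact that $H$ is a complement to $N$, and it makes explicit the elementary group-theoretic shadow of the irreducibility of $\pi_{0}$ (no nontrivial proper $G$-normal subgroup sits inside $N$), tying the lemma back to Section 2. What the paper's route buys is brevity: applying maximality directly to the containment $H\subseteq C_{b'}$ renders your entire middle section unnecessary, and the translation-generator case is dispatched in one sentence.
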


\begin{proof}
Let
\[
\left[
\begin{array}
[c]{cc}%
u^{j} & x\\
0 & 1
\end{array}
\right]
\]
be a generator then if $u^{j}=1$ the group is maximal if and only if $x$
generates the additive group of $F$. So if maximal $H$ is normal. Otherwise,
$u^{j}\neq1$ and thus
\[
H\subset\left[
\begin{array}
[c]{cc}%
1 & \frac{x}{u^{j}-1}\\
0 & 1
\end{array}
\right]  C_{0}\left[
\begin{array}
[c]{cc}%
1 & \frac{-x}{u^{j}-1}\\
0 & 1
\end{array}
\right]  .
\]
Hence $H$ is maximal only if the inclusion is an equality. But then $H=C_{b}$
with $b=\frac{x(u-1)}{u^{j}-1}$.
\end{proof}

\section{Preliminary considerations}

We begin by considering the state%
\[
v=\frac{1}{\sqrt{q-1}}\sum_{c\in C_{b}}\left\vert cg\right\rangle
\]
for some $g\in G,$ $b\in F$. \ We first observe that $v$ is invariant under
the left regular action of $C_{b}$.

Let $\mathcal{F}_{M}$ denote the quantum Fourier transform for the cyclic
multiplicative group $F^{\times}$. Recall that
\[
\mathcal{F}_{M}\left\vert u^{j}\right\rangle =\frac{1}{\sqrt{q-1}}\sum
_{k=0}^{q-2}\zeta^{kj}\left\vert u^{k}\right\rangle .
\]
Now assume that
\[
g=\left[
\begin{array}
[c]{cc}%
u^{r} & x\\
0 & 1
\end{array}
\right]  ,c=\left[
\begin{array}
[c]{cc}%
u^{j} & \frac{u^{j}-1}{u-1}b\\
0 & 1
\end{array}
\right]
\]
then%
\[
\left\vert cg\right\rangle =\left[
\begin{array}
[c]{cc}%
u^{j+r} & \frac{u^{j}-1}{u-1}(b+(u-1)x)+x\\
0 & 1
\end{array}
\right]  .
\]

We consider $w=\left(  \mathcal{F}_{M}\otimes I\right)  v$ that is%
\[
w=\frac{1}{q-1}\sum_{j,k=0}^{q-2}\zeta^{k(j+r)}\left\vert u^{k}\right\rangle
\otimes\left\vert \frac{u^{j}-1}{u-1}(b+(u-1)x)+x\right\rangle .
\]
There are two cases

a) $x=\frac{b}{1-u}$ then
\[
w=\frac{1}{q-1}\sum_{j,k=0}^{q-2}\zeta^{k(j+r)}\left\vert u^{k}\right\rangle
\otimes\left\vert \frac{b}{1-u}\right\rangle =\left\vert 1\right\rangle
\otimes\left\vert \frac{b}{1-u}\right\rangle .
\]

b) $x\neq\frac{b}{1-u}$ then the elements $\left\vert \frac{u^{j}-1}%
{u-1}(b+(u-1)x)+x\right\rangle $ for $j=0,...,q-2$ are all distinct. The
coefficient of $\left\vert 0\right\rangle $ is
\[
\frac{1}{q-1}\sum_{j=0}^{q-2}\left\vert \frac{u^{j}-1}{u-1}b+u^{j}%
x\right\rangle =\frac{1}{q-1}\left(  \sum_{j=0}^{q-1}\left\vert j\right\rangle
-\left\vert \frac{b}{1-u}\right\rangle \right)  .
\]
The coefficient of $\left\vert u^{k}\right\rangle \neq\left\vert
1\right\rangle $ is
\[
\frac{\zeta^{kr}}{q-1}\sum_{j=0}^{q-2}\zeta^{kj}\left\vert \frac{u^{j}-1}%
{u-1}b+u^{j}x\right\rangle .
\]
We note that this element is in $L_{0}^{2}(F)$. We also note that $\pi\left(
\left[
\begin{array}
[c]{cc}%
\alpha & \beta\\
0 & 1
\end{array}
\right]  \right)  \left\vert z\right\rangle =\left\vert \alpha z+\beta
\right\rangle .$This implies that%
\[
\pi\left(  \left[
\begin{array}
[c]{cc}%
u & b\\
0 & 1
\end{array}
\right]  \right)  \sum_{j=0}^{q-2}\zeta^{kj}\left\vert \frac{u^{j}-1}%
{u-1}b+u^{j}x\right\rangle =
\]%
\[
\sum_{j=0}^{q-2}\zeta^{kj}\left\vert u\frac{u^{j}-1}{u-1}b+u^{j+1}%
x+b\right\rangle =\sum_{j=0}^{q-2}\zeta^{kj}\left\vert \frac{u^{j+1}-1}%
{u-1}b+u^{j+1}x\right\rangle
\]%
\[
=\zeta^{-k}\sum_{j=0}^{q-2}\zeta^{kj}\left\vert \frac{u^{j}-1}{u-1}%
b+u^{j}x\right\rangle .
\]
Combining these calculations and the Lemma 2 we have

\begin{theorem}
If $x=\frac{b}{1-u}$ then $\left(  \mathcal{F}_{M}\otimes I\right)  v=$
$\left\vert 1\right\rangle \otimes\left\vert \frac{b}{1-u}\right\rangle $
otherwise
\[
\left(  \mathcal{F}_{M}\otimes I\right)  v=\frac{1}{q-1}\left\vert
1\right\rangle \otimes\left(  \sum_{j=0}^{q-1}\left\vert j\right\rangle
-\left\vert \frac{b}{1-u}\right\rangle \right)  +\sum_{k\neq0}\left\vert
u^{k}\right\rangle \otimes f_{k}%
\]
with $f_{k}$ for $k\neq0$ a non-zero multiple of $\phi_{-k,\frac{b}{1-u}}$.
\end{theorem}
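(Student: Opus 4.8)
The plan is to compute $(\mathcal{F}_{M}\otimes I)v$ directly from the definition of $\mathcal{F}_{M}$, expand into the double sum over $j,k$ already displayed, and then read off the coefficient of each first-factor basis ket $\left\vert u^{k}\right\rangle$, handling $k=0$ and $k\neq0$ separately. Since the second-factor label has already been simplified to
\[
\frac{u^{j}-1}{u-1}(b+(u-1)x)+x=\frac{u^{j}-1}{u-1}b+u^{j}x,
\]
the remaining work is purely the organization of the $j$-sums, so the two asserted formulas should fall out once the range of these labels is understood.

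For the first case $x=\frac{b}{1-u}$, I would substitute and note that $b+(u-1)x=0$, so the second factor collapses to the constant ket $\left\vert \frac{b}{1-u}\right\rangle$, independent of $j$. The coefficient of $\left\vert u^{k}\right\rangle$ is then $\frac{1}{q-1}\sum_{j=0}^{q-2}\zeta^{k(j+r)}$, and by orthogonality of the characters of $F^{\times}$ the inner sum $\sum_{j=0}^{q-2}\zeta^{kj}$ equals $q-1$ if $k=0$ and vanishes otherwise. Hence only the $k=0$ term survives, with $\zeta^{0\cdot r}=1$, giving $\left\vert 1\right\rangle \otimes\left\vert \frac{b}{1-u}\right\rangle$ exactly as claimed.

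For the second case $x\neq\frac{b}{1-u}$, the central observation is the rewriting
\[
\frac{u^{j}-1}{u-1}b+u^{j}x=u^{j}\left(  x-\frac{b}{1-u}\right)  +\frac{b}{1-u}.
\]
Because $x-\frac{b}{1-u}\neq0$ and $u^{j}$ runs over all of $F^{\times}$ as $j=0,\ldots,q-2$, these $q-1$ labels are distinct and exhaust $F\setminus\{\frac{b}{1-u}\}$; this is exactly the content of Lemma 1. The $k=0$ coefficient is therefore $\frac{1}{q-1}\sum_{j=0}^{q-2}\left\vert \frac{u^{j}-1}{u-1}b+u^{j}x\right\rangle =\frac{1}{q-1}\big(\sum_{j=0}^{q-1}\left\vert j\right\rangle -\left\vert \frac{b}{1-u}\right\rangle \big)$, the stated first term. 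Writing $F_{k}=\sum_{j=0}^{q-2}\zeta^{kj}\left\vert \frac{u^{j}-1}{u-1}b+u^{j}x\right\rangle$, the coefficient of $\left\vert u^{k}\right\rangle$ is $\frac{\zeta^{kr}}{q-1}F_{k}$. Since $\sum_{j}\zeta^{kj}=0$ for $k\neq0$ we have $F_{k}\in L_{0}^{2}(F)$, and the eigenvalue computation recorded just above the theorem gives $\pi(c)F_{k}=\zeta^{-k}F_{k}$, hence $\pi_{0}(c)F_{k}=\zeta^{-k}F_{k}$. The uniqueness clause of Lemma 2, applied with eigenvalue $\zeta^{-k}$ (index $-k\neq0$), then forces $F_{k}=z\,\phi_{-k,\frac{b}{1-u}}$ for a scalar $z$.

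I expect no genuine obstacle here: the representation-theoretic content is entirely supplied by the eigenvalue relation and the uniqueness statement of Lemma 2, and what remains is bookkeeping. The one point requiring care is tracking the sign of the index, namely that the eigenvalue $\zeta^{-k}$ matches $\phi_{-k,\cdot}$ and not $\phi_{k,\cdot}$. Nonvanishing of the multiple is then immediate, since $F_{k}$ is a sum of distinct computational basis kets with coefficients $\zeta^{kj}\neq0$, so $F_{k}\neq0$ and therefore $z\neq0$, completing the identification of $f_{k}$ as a nonzero multiple of $\phi_{-k,\frac{b}{1-u}}$.
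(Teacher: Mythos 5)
Your proposal is correct and follows essentially the same route as the paper: expand the double sum, use character orthogonality for the case $x=\frac{b}{1-u}$, identify the $k=0$ coefficient via the labels exhausting $F\setminus\{\frac{b}{1-u}\}$, and for $k\neq 0$ combine the eigenvalue relation $\pi_0(c)F_k=\zeta^{-k}F_k$ with the uniqueness clause of Lemma 2 to get $f_k=z\,\phi_{-k,\frac{b}{1-u}}$. Your explicit verification that $z\neq 0$ (distinct kets with nonzero coefficients) is a small point the paper leaves implicit, but the argument is the same.
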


\bigskip

\section{The method}

We now assume that we have a surjective function $f:G\rightarrow
\{0,1,...,q-1\}$(which we look upon as $%
\mathbb{Z}
/q%
\mathbb{Z}
$) that is constant on the cosets of $C_{b}$ with $b\in F$. We will now
describe a method of determining $b$. We begin with the initial state%
\[
\frac{1}{\sqrt{q(q-1)}}%
{\displaystyle\sum\limits_{g\in G}}
\left\vert g\right\rangle \otimes\left\vert 0\right\rangle \in L^{2}(G)\otimes%
\mathbb{C}
\lbrack%
\mathbb{Z}
/q%
\mathbb{Z}
].
\]

We apply the unitary transformation on $L^{2}(G)\otimes%
\mathbb{C}
\lbrack%
\mathbb{Z}
/q%
\mathbb{Z}
]$ defined by%
\[
\left\vert g\right\rangle \otimes\left\vert x\right\rangle \longmapsto
\left\vert g\right\rangle \otimes\left\vert x+f(g)\right\rangle
\]
on the first two tensor factors. Obtaining%
\[
\frac{1}{\sqrt{q(q-1)}}%
{\displaystyle\sum\limits_{g\in G}}
\left\vert g\right\rangle \otimes\left\vert f(g)\right\rangle .
\]
We measure the second tensor factor (and discard it) and the state collapses
to%
\[
\frac{1}{\sqrt{q-1}}%
{\displaystyle\sum\limits_{c\in C_{b}}}
\left\vert cg\right\rangle
\]
With $g$ a random element of $G$. We write it as%
\[
\left[
\begin{array}
[c]{cc}%
u^{r} & x\\
0 & 1
\end{array}
\right]  .
\]
Since the probability of the random element $x$ having the value $\frac
{b}{1-u}$ is $\frac{1}{q}$ we will assume that $x\neq\frac{b}{1-u}$. Thus in
the notation of the previous sections we have%
\[
\frac{1}{\sqrt{q-1}}%
{\displaystyle\sum\limits_{c\in C_{b}}}
\left\vert u^{j+r}\right\rangle \otimes\left\vert u^{j}x+\frac{u^{j}-1}%
{u-1}b\right\rangle .
\]
We now apply $\mathcal{F}_{M}\otimes I$ which according to Proposition 4
yields (in the notation of that result)
\[
\frac{1}{q-1}\left\vert 1\right\rangle \otimes\left(  \sum_{j=0}%
^{q-1}\left\vert j\right\rangle -\left\vert \frac{b}{1-u}\right\rangle
\right)  +\sum_{k=0}^{q-2}\left\vert u^{k}\right\rangle \otimes f_{k}.
\]
We note that the two terms are orthogonal thus if we measure the first factor
then the probability of measuring some $\left\vert k\right\rangle \neq0$ is
$\frac{q-2}{q-1}$ so we may assume that we have measured some $k\neq0$. We
discard the factor $\left\vert k\right\rangle $ and now have up to phase
$\phi_{-k,\frac{b}{1-u}}$. We now take an additive quantum Fourier transform
and have $\mathcal{F}_{A}\phi_{-k,\frac{b}{1-u}}$.

We are thus left with finding $b$ knowing $k$ and $\mathcal{F}_{A}%
\phi_{-k,\frac{b}{1-u}}$. \ For simplicity of notation we will replace
$\frac{b}{1-u}$ with $b$. We will now calculate $\mathcal{F}_{A}\phi_{k,b}$ in
$L^{2}(F)$. First
\[
\phi_{-k,b}=\frac{1}{\sqrt{q-1}}\sum_{y\in F^{\times}}\chi_{k}(y)\left\vert
y+b\right\rangle .
\]

We now calculate the (additive) quantum Fourier transform of this state
$\mathcal{F}_{A}(\phi_{-k,b})$. Recalling that it is defined as follows. Fix a
non-trivial character of the additive group $F,$ $\eta^{-1}$. Then%
\[
\mathcal{F}_{A}\left\vert z\right\rangle =\frac{1}{\sqrt{q}}\sum_{t\in F}%
\eta(tz)\left\vert t\right\rangle .
\]

Thus%
\[
\mathcal{F}_{A}\phi_{k,b}=\frac{1}{\sqrt{q(q-1)}}\sum_{%
\begin{array}
[c]{c}%
y\in F^{\times}\\
t\in F
\end{array}
}\chi_{k}(y)\eta(t(y+b))\left\vert t\right\rangle =
\]%
\[
\frac{1}{\sqrt{q(q-1)}}\sum_{t\in F}\left(  \sum_{y\in F^{\times}}\chi
_{k}(y)\eta(ty)\right)  \eta(tb)\left\vert t\right\rangle .
\]
We note that if $t=0$ then $\sum_{y\in F^{\times}}\chi_{k}(y)\eta
(ty)=\sum_{y\in F^{\times}}\chi_{k}(y)=0$ since $k\neq0$. If $z\in F^{\times}
$ and $z=u^{j}$ then we set for $t\in F$%
\[
G(u^{j},t)=\frac{1}{\sqrt{q}}\sum_{y\in F^{\times}}\chi_{j}(y)\eta(ty).
\]
This (except for normalization) is known to number theorists as a Gauss sum.
We have shown%
\[
\mathcal{F}_{A}\phi_{k,b}=\frac{1}{\sqrt{q-1}}\sum_{t\in F^{\times}}%
G(u^{k},t)\eta(tb)\left\vert t\right\rangle .
\]
The following is a standard fact [c.f. [Ros]) about Gauss sums (the following
lemma is standard we include a proof for the convenience of the reader).

\begin{lemma}
If $z,t\in F^{\times}$ then
\[
G(u^{j},z)\overline{G(u^{j},t)}=\chi_{j}(\frac{z}{t}).
\]
In particular, $\left\vert G(u^{j},z)\right\vert =1$ and $\overline
{G(u^{j},1)}G(u^{j},z)=\chi_{j}(z)$.
\end{lemma}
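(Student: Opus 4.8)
The cleanest route is to strip the dependence on $z$ and $t$ off the sum before doing any real work. Since $z,t\in F^{\times}$, rescaling the summation variable (by $z^{-1}$, respectively $t^{-1}$) is a bijection of $F^{\times}$ that leaves the additive character $\eta$ unchanged and factors out a single unit-modulus multiplicative character; thus each Gauss sum $G(u^{j},z)$ is nothing but a character twist of the one fixed sum $G(u^{j},1)$. Substituting these expressions into the product and collecting the character factors collapses it to
\[
G(u^{j},z)\overline{G(u^{j},t)}=\chi_{j}(z/t)\,\bigl|G(u^{j},1)\bigr|^{2},
\]
so the entire lemma reduces to the single assertion $\bigl|G(u^{j},1)\bigr|=1$.

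That one remaining fact is the standard Gauss-sum orthogonality computation, and it is the step I expect to carry the real content. Expanding $G(u^{j},1)\overline{G(u^{j},1)}$ from the definition, using $\overline{\chi_{j}(w)}=\chi_{j}(w^{-1})$ and $\overline{\eta(w)}=\eta(-w)$, gives the double sum
\[
\bigl|G(u^{j},1)\bigr|^{2}=\frac{1}{q}\sum_{y,w\in F^{\times}}\chi_{j}(yw^{-1})\,\eta(y-w).
\]
The substitution $y=sw$ with $s\in F^{\times}$ decouples the indices and turns this into
\[
\frac{1}{q}\sum_{s\in F^{\times}}\chi_{j}(s)\sum_{w\in F^{\times}}\eta\bigl((s-1)w\bigr),
\]
where the inner additive sum equals $q-1$ when $s=1$ and $-1$ otherwise, the $-1$ being the missing $w=0$ term of the complete sum over $F$ (which vanishes for $s\neq1$).

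Combining the two orthogonality relations is the one place needing care. The index $s=1$ contributes $(q-1)\chi_{j}(1)=q-1$, while every other index carries the coefficient $-1$ and so contributes $-\sum_{s\neq1}\chi_{j}(s)=1$, since the full character sum $\sum_{s\in F^{\times}}\chi_{j}(s)$ vanishes for the nontrivial character $\chi_{j}$ (the multiplicative analog of the vanishing already used to discard the $t=0$ term in the construction of $G$). The two contributions add to $q$, and dividing by $q$ yields $\bigl|G(u^{j},1)\bigr|=1$. This is precisely where the hypothesis $j\neq0$ is indispensable, and it is the case we are in since we reduced earlier to $k\neq0$. The two ``in particular'' statements are then immediate specializations of the displayed identity: taking $z=t$ gives $|G(u^{j},z)|^{2}=\chi_{j}(1)=1$, and taking $t=1$ gives $\overline{G(u^{j},1)}G(u^{j},z)=\chi_{j}(z)$.
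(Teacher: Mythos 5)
Your proof is essentially the paper's computation, mildly repackaged: the paper evaluates $qG(u^{j},z)\overline{G(u^{j},t)}$ directly as a double sum, changes variables, and applies additive and then multiplicative character orthogonality; you first strip the dependence on $z,t$ by a twist and then run exactly that same orthogonality computation on $\left\vert G(u^{j},1)\right\vert ^{2}$. The reduction is a nice simplification, your evaluation of $\left\vert G(u^{j},1)\right\vert ^{2}=1$ is correct, and flagging that $j\neq 0$ is indispensable there makes explicit a hypothesis the paper leaves implicit.

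However, there is a genuine slip in the step ``collecting the character factors,'' and it is the one place where conjugations must be tracked. The substitution $y\mapsto z^{-1}y$ gives
\[
G(u^{j},z)=\overline{\chi_{j}(z)}\,G(u^{j},1),\qquad\text{hence}\qquad\overline{G(u^{j},t)}=\chi_{j}(t)\,\overline{G(u^{j},1)},
\]
so the product collapses to $\overline{\chi_{j}(z)}\chi_{j}(t)\left\vert G(u^{j},1)\right\vert ^{2}=\chi_{j}(t/z)\left\vert G(u^{j},1)\right\vert ^{2}$, not $\chi_{j}(z/t)\left\vert G(u^{j},1)\right\vert ^{2}$. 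What your argument actually proves is $G(u^{j},z)\overline{G(u^{j},t)}=\overline{\chi_{j}(z/t)}$, and correspondingly $\overline{G(u^{j},1)}G(u^{j},z)=\overline{\chi_{j}(z)}=\chi_{-j}(z)$; only the assertion $\left\vert G(u^{j},z)\right\vert =1$ survives unchanged. You are in good company: the lemma as stated in the paper has the same inversion, and the paper's own proof commits the matching error (the exceptional index in its sum is $y=t/z$, where $zy=t$, not $y=z/t$ as written), which is why its statement and proof appear to agree. The discrepancy is harmless downstream, since it only amounts to relabeling $k$ as $-k$ (applying $T_{-k}$ rather than $T_{k}$), but a careful write-up should either prove the conjugated identity or note the correction explicitly rather than let the two inversions cancel silently.
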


\begin{proof}
We calculate%
\[
qG(u^{j},z)\overline{G(u^{j},t)}=\sum_{y\in F^{\times}}\chi_{j}(y)\eta
(zy)\sum_{y\in F^{\times}}\chi_{j}(y^{-1})\eta(-ty)=
\]%
\[
\sum_{y,s\in F^{\times}}\chi_{j}(ys^{-1})\eta(zy-ts).
\]
In this expression we change variables $yt^{-1}\rightarrow y$ and get%
\[
\sum_{y,s\in F^{\times}}\chi_{j}(y)\eta(zys-ts)=\sum_{y,s\in F^{\times}}%
\chi_{j}(y)\eta((zy-t)s).
\]
Fixing $y$ and summing in $s$ we see that if $zy\neq t$ then the sum in $s$
is
\[
\chi_{j}(y)\sum_{s\neq0}\eta(s)=-\chi_{j}(y).
\]
If $y=\frac{z}{t}$ then the sum in $t$ is $(q-1)\chi_{j}(\frac{z}{t}).$ This
implies that the total sum is%
\[
(q-1)\chi_{j}(\frac{z}{t})-\sum_{y\neq\frac{z}{t}}\chi_{j}(y)=(q-1)\chi
_{j}(\frac{z}{t})+\chi_{j}(\frac{z}{t}).
\]

\end{proof}

Since states are determined up to phase if we multiply $\mathcal{F}_{A}%
\phi_{k,b}$ by $\overline{G(u^{k},1)}$ we have
\[
\mathcal{F}_{A}\phi_{k,b}=\frac{1}{\sqrt{q-1}}\sum_{s\in F^{\times}}\chi
_{k}(s)\eta(sb)\left\vert s\right\rangle .
\]
Consider the unitary operators, on $L^{2}(F)$ given by $T_{j}\left\vert
t\right\rangle =\chi_{j}(t)^{-1}\left\vert t\right\rangle ,$ $t\neq0$ and
$T_{j}\left\vert 0\right\rangle =\left\vert 0\right\rangle $. For the moment
we assume that we have implemented these operators one state at a time.
Applying $T_{k}$ we are left with calculating $b$ given%
\[
\frac{1}{\sqrt{q-1}}\sum_{s\in F^{\times}}\eta(sb)\left\vert s\right\rangle .
\]
If we apply the inverse of the additive quantum Fourier transform to this we
have%
\[
\mathcal{F}_{A}^{-1}\frac{1}{\sqrt{q-1}}\sum_{s\in F^{\times}}\eta
(sb)\left\vert s\right\rangle =\psi_{b}.
\]
Measuring this in the computational basis of $L^{2}(F)$ yields a collapse to
$\left\vert b\right\rangle $ with probability $\frac{q-1}{q}$.

Thus with probability $\left(  \frac{q-1}{q}\right)  ^{2}$ we can calculate
$b$ with complexity at most the cost of iteration one multiplicative Quantum
Fourier transform and 3 additive Quantum Fourier transforms and one value of
instance of $T_{j}$. In the next section we will give in the next section a
quantum algorithm that calculates $T_{j}$ with the complexity of two
multiplicative Quantum Fourier transforms and $O(-\log\log(q-1)\log
(\varepsilon))$ to get a probability of success $1-\varepsilon$.

\section{The last step}

In the last section we found an efficient method of solving the hidden
subgroup algorithm with probability $\left(  \frac{q-1}{q}\right)  ^{2}$ of
success assuming that we have an oracle that calculates the unitary operator
on $L^{2}(F)$ defined by%
\[
T_{k}\left\vert x\right\rangle =\left\{
\begin{tabular}
[c]{l}%
$\left\vert 0\right\rangle $ if $x=0$\\
$\zeta^{-jk}\left\vert x\right\rangle $ if $x=\left\vert u^{j}\right\rangle $%
\end{tabular}
\ \right.  .
\]

We note that if $x\neq0$ and $x=u^{j}$ with $0\leq j\leq q-2$ then $j$ is the
discrete log of $x$ and we denote it by $L(x).$

Given a state $v=\sum_{x\in F^{\times}}a_{x}\left\vert x\right\rangle $ we
will calculate $T_{k}v$ with complexity two multiplicative Fourier transforms,
two superpositions of polylog operations times $O((\log\log q)(\log
q)(-\log\varepsilon)$ and probability of success $1-\varepsilon$. . We begin
as in the implementation of Shor's discrete log algorithm in [NC, p.238] with
the state
\[
\frac{1}{q-1}\sum_{r,s=0}^{q-2}\sum_{x\in F-\{0\}}a_{x}\left\vert
r\right\rangle \otimes\left\vert x\right\rangle \otimes\left\vert
s\right\rangle \otimes\left\vert 0\right\rangle
\]
with the second and third tensor factor in $F^{\times}$ and the last factor is
in $L^{2}(F)$. Next as in [NC] define the unitary operator
\[
V:%
\mathbb{C}
^{q-1}\otimes L^{2}(F)\otimes%
\mathbb{C}
^{q-1}\otimes L^{2}(F)\rightarrow%
\mathbb{C}
^{q-1}\otimes L^{2}(F)\otimes%
\mathbb{C}
^{q-1}\otimes L^{2}(F)
\]
by%
\[
V\left\vert r\right\rangle \otimes\left\vert x\right\rangle \otimes\left\vert
s\right\rangle \otimes\left\vert z\right\rangle =\left\vert r\right\rangle
\otimes\left\vert x\right\rangle \otimes\left\vert s\right\rangle
\otimes\left\vert z+x^{r}u^{s}\right\rangle .
\]
If we apply $V$ to our initial state we have
\[
\frac{1}{q-1}\sum_{r,s=0}^{q-1}\sum_{x\neq0}a_{x}\left\vert r\right\rangle
\otimes\left\vert x\right\rangle \otimes\left\vert s\right\rangle
\otimes\left\vert x^{r}u^{s}\right\rangle .
\]

We observe that if $x\neq0$, $x=u^{L(x)}$ with $0\leq L(x)<q-1$ (the discrete
log). Thus the expression we obtain is%
\[
\frac{1}{q-1}\sum_{r,s=0}^{q-2}\sum_{x\neq0}a_{x}\left\vert r\right\rangle
\otimes\left\vert x\right\rangle \otimes\left\vert s\right\rangle
\otimes\left\vert u^{L(x)r+s}\right\rangle .
\]
To this we apply $\mathcal{F}_{M}\otimes I\otimes\mathcal{F}_{M}\otimes I$ and
obtain%
\[
\frac{1}{(q-1)^{2}}\sum_{r,s,l,m=0}^{q-2}\sum_{x\neq0}\zeta^{lr+ms}%
a_{x}\left\vert x\right\rangle \otimes\left\vert l\right\rangle \otimes
\left\vert m\right\rangle \otimes\left\vert u^{L(x)r+s}\right\rangle .
\]
We make the change index of summation $s\rightarrow z=L(x)r+s$ so the sum
becomes%
\[
\frac{1}{(q-1)^{2}}\sum_{x\neq0}a_{x}\sum_{r,z,l,m=0}^{q-1}\zeta
^{(l-mL(x))r+mz}\left\vert l\right\rangle \otimes\left\vert x\right\rangle
\otimes\left\vert m\right\rangle \otimes\left\vert u^{z}\right\rangle .
\]
The sum over $r$ is $0$ if $l\neq mL(x)$ thus we obtain%
\[
\frac{1}{q-1}\sum_{x\neq0}a_{x}\sum_{z,m=0}^{q-1}\zeta^{mz}\left\vert
L(x)m\right\rangle \otimes\left\vert x\right\rangle \otimes\left\vert
m\right\rangle \otimes\left\vert u^{z}\right\rangle .
\]
We now measure the third and fourth tensor factors discard and obtain for some
$m$ random but now known
\[
\sum_{x\neq0}a_{x}\left\vert L(x)m\right\rangle \otimes\left\vert
x\right\rangle .
\]
If $m$ is relatively prime to $q-1$ then we can use the Euclidean algorithm to
efficiently calculate the inverse of $m$ in $%
\mathbb{Z}
/(q-1)%
\mathbb{Z}
$ getting%
\[
\sum_{x\neq0}a_{x}\left\vert L(x)\right\rangle \otimes\left\vert
x\right\rangle
\]
With probability at least
\[
\frac{\phi(q-1)}{q-1}%
\]
and complexity at most that of the discrete logarithm of size $q-1$ and the
classical complexity of taking powers of $\zeta$.

This is classically the function, $h,$ on $\mathbb{Z}/(q-1)\mathbb{Z}\times
F$, that assigns to the pair $(L(x),x)$ the value $a_{x}$ and to all other
elements $0$. We also define classically the function $U_{k}(j,x)=\zeta^{-kj}%
$. Then our desired function is $U_{k}h$. Thus we have using polynomial time
classical operations
\[
u=\sum_{x\neq0}\zeta^{-kL(x)}a_{x}\left\vert L(x)\right\rangle \otimes
\left\vert x\right\rangle .
\]
We consider the unitary operator, $T$ , From $L^{2}(\mathbb{Z}/(q-1)\mathbb{Z}%
\times F)$ to $L^{2}(F^{\times}\times F)$ defined by%
\[
T\left\vert j,x\right\rangle =\left\vert u^{j},u^{-j}x\right\rangle
\]
which is a superposition of classically logpoly operations. This yields
\[
Tu=\sum_{x\neq0}\zeta^{-kL(x)}a_{x}\left\vert u^{L(x)}\right\rangle
\otimes\left\vert 1\right\rangle .
\]
We discard the factor $\left\vert 1\right\rangle $ and get%
\[
\sum_{x\neq0}\zeta^{-kL(x)}a_{x}\left\vert x\right\rangle .
\]
This is our desired implementation of $T_{k}$. Applying it to the state with
\[
a_{x}=\frac{\zeta^{kL(x)}\eta(xb)}{\sqrt{q-1}}%
\]
completes the method described in the last section..

\begin{lemma}
If we apply the above algorithm $-2\log\log(q-1)\log(\varepsilon)$ times then
the probability of success is $1-\varepsilon$ (here $\log$ is the natural logarithm).
\end{lemma}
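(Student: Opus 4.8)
The plan is to reduce the statement to a standard lower bound for Euler's totient function. First I would identify the unique place where the construction of $T_k$ in this section can fail. Every step is deterministic and unitary except the measurement of the third and fourth tensor factors, which returns a uniformly random $m\in\mathbb{Z}/(q-1)\mathbb{Z}$; the subsequent inversion of $m$ modulo $q-1$ (and hence the whole implementation of $T_k$, and thus the method of Section 5 for which it is the last ingredient) succeeds exactly when $\gcd(m,q-1)=1$. There are $\phi(q-1)$ such residues, so a single run succeeds with probability $p=\frac{\phi(q-1)}{q-1}$, as already noted in the text. Since the condition $\gcd(m,q-1)=1$ is verified classically in polylog time, a failed run is detected and we restart with a fresh computation; treating the runs as independent Bernoulli trials, the probability that $N$ consecutive runs all fail is $(1-p)^N$.

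Next I would make the elementary estimate that turns this into an inequality for $N$. From $1-x\le e^{-x}$ we get
\[
(1-p)^N\le e^{-pN}=\exp\!\left(-\frac{\phi(q-1)}{q-1}\,N\right),
\]
so it suffices to choose $N$ so large that $\frac{\phi(q-1)}{q-1}N\ge -\log\varepsilon$; equivalently
\[
N\ \ge\ \frac{q-1}{\phi(q-1)}\,(-\log\varepsilon).
\]

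The real content is the lower bound on $\phi(q-1)/(q-1)$, and this is the step I expect to be the main obstacle since it fixes the constant in the statement. I would invoke the classical Rosser--Schoenfeld estimate
\[
\frac{n}{\phi(n)}\ <\ e^{\gamma}\log\log n+\frac{3}{\log\log n},\qquad n\ge 3,
\]
where $\gamma$ is Euler's constant. Because $e^{\gamma}=1.781\ldots<2$, the right-hand side is at most $2\log\log n$ for every sufficiently large $n$, so $\frac{q-1}{\phi(q-1)}\le 2\log\log(q-1)$ once $q$ exceeds an absolute constant. Substituting this into the previous display shows that the choice $N=-2\log\log(q-1)\log\varepsilon$ already forces $(1-p)^N\le\varepsilon$, which is precisely the assertion of the lemma. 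The factor $2$ in the statement is exactly what is needed to dominate $e^{\gamma}$ and to absorb the lower-order term $3/\log\log n$ in the totient bound for large $q$; the remaining probabilistic losses in the method, being of the form $(\frac{q-1}{q})^{2}\to 1$, are handled by only a bounded number of extra repetitions and do not affect the $\log\log(q-1)$ scaling.
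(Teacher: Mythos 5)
Your proposal is correct and follows essentially the same route as the paper: both identify the coprimality of the measured $m$ with $q-1$ as the sole source of failure, invoke the same Hardy--Wright/Rosser--Schoenfeld totient estimate giving $\phi(q-1)/(q-1)\gtrsim 1/(2\log\log(q-1))$, and convert it into the stated repetition count by an elementary exponential inequality. The only cosmetic difference is that the paper groups the runs into $-\log\varepsilon$ blocks of $2\log\log(q-1)$ trials each, using that $\left(1-\frac{1}{x}\right)^{x}$ increases monotonically to $\frac{1}{e}$, whereas you apply $1-x\le e^{-x}$ directly and solve for $N$; both arguments also share the same implicit ``$q$ sufficiently large'' caveat in passing from the totient bound to the factor $2$, which you at least state explicitly.
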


\begin{proof}
Set $\phi(n)$ equal to the number integers $m$ with $1<m<$ $n$ and
$\gcd(m,n)=1$. One has (cf. [HW]
\[
\frac{\phi(n)}{n}\geq e^{-\gamma}\frac{1}{(\log\log n+\frac{3}{\log\log n}%
)}>\frac{1}{2\log\log n}.
\]
With $\gamma$ equal to Euler's constant so $e^{-\gamma}>0.56$. Thus the
probability of failure of the above algorithm after $2\log\log(q-1)$ runs is
less than
\[
\left(  1-\frac{1}{2\log\log(q-1)}\right)  ^{2\log\log(q-1)}.
\]
Using elementary calculus we see that $\left(  1-\frac{1}{x}\right)  ^{x}$ is
monotonically increasing to $\frac{1}{e}$ for $x>1$. This implies that after
$2\log\log(q-1)$ steps the probability of failure is $<\frac{1}{e}$. If we run
tests $-\log(\varepsilon)$ times then the probability of failure is at most
$\varepsilon$. This proves the result.
\end{proof}

We have completed the proof of the result asserted in the introduction.

We note that for certain infinite sequences of fields of order a power of 2 we
can leave off the loglog term. That is find infinite sequences of choices of
$q$ with $\frac{\phi(q-1)}{q-1}$ bounded below. We will carry this out in the
next section.

We now calculate the complexity of the algorithm. The number of multiplicative
Fourier transforms used is $4$ and the number of additive ones is $2$. The
last step involves a subroutine that has complexity the preparation of an
initial state and the classical operations described above.

\section{Examples}

\begin{enumerate}
\item We fix $p$ a prime. The sequence of values of $q$ will be $p^{n}$ with
$n$ a prime. We assert that for such $q$ we have%
\[
\frac{\phi(q-1)}{q-1}\geq C_{p}%
{\displaystyle\prod\limits_{%
\begin{array}
[c]{c}%
l\text{ prime}\\
l<p
\end{array}
}}
\left(  1-\frac{1}{l}\right)
\]
where $C_{p}$ is the minimum of $\left(  1-\frac{1}{2n+1}\right)  ^{n}$ for
$n$ a prime $n>p$. (This number for $p$ large is arbitrarily close to
$e^{\frac{-1}{2}}$.) To see this we observe that if $l$ is a prime that
divides $p^{n}-1$ then
\[
p^{n}\equiv1\text{ mod }l
\]
if $l\neq2$ then Fermat's little theorem implies that $n$ divides $l-1$. Thus
$l=kn+1$. Assuming that $n\neq2$ this implies that $k$ is even. Now we have%
\[
p^{n}-1=%
{\displaystyle\prod\limits_{%
\begin{array}
[c]{c}%
l\text{ prime}\\
l|(p^{n}-1)
\end{array}
}}
l^{e_{l}}.
\]
Thus
\[
n>\text{log}_{p}(p^{n}-1)=\sum_{l}e_{l}\text{log}_{p}l.
\]
This implies that the number of primes $l>p$ that divide $p^{n}-1$ is less
then $n$. \ Now
\[
\frac{\phi(p^{n}-1)}{p^{n}-1}=%
{\displaystyle\prod\limits_{%
\begin{array}
[c]{c}%
l\text{ prime}\\
l|(p^{n}-1)
\end{array}
}}
\left(  1-\frac{1}{l}\right)  .
\]
Applying the observations above about such primes we have%
\[
\frac{\phi(p^{n}-1)}{p^{n}-1}\geq\left(  1-\frac{1}{2n+1}\right)  ^{n}%
{\displaystyle\prod\limits_{%
\begin{array}
[c]{c}%
l\text{ prime}\\
l<p
\end{array}
}}
\left(  1-\frac{1}{l}\right)  .
\]

\item For special values of $p$ we can do much better. If $p=2$ we can by
doing computer calculations see that a lower bound of $.6$ prevails. If $p=3$
it computer calculation suggests that a lower bound of $.45$ works. We can
prove (with the help of a computer) $.3$. The case of $p=2$ is especially
interesting since there are extremely large primes, $n$, such that $2^{n}-1$
is a prime (Mersenne prime). The record at this time is $n=43,112,609$. If
$q-1$ is a prime then $\frac{\phi(q-1)}{q-1}=1-\frac{1}{q-1}.$

\item Another class of examples that are especially important to discrete
wavelets is the $q=2^{2^{n}}$. \ Computer experimentation suggests that there
is a lower bound of $.4997$. We can prove that there is a positive lower bound
independent of $n$ as follows. Let $F_{n}=2^{2^{n}}+1$ (the $n$-th Fermat
number). It is well known that these numbers are relatively prime. Also it is
easy to see that $2^{2^{n}}-1=F_{0}F_{1}\cdots F_{n-1}$. Thus if $\mathcal{S}$
is the set of primes that divide some Fermat number then%
\[
\frac{\phi(2^{2^{n}}-1)}{2^{2^{n}}-1}\geq%
{\displaystyle\prod\limits_{l\in\mathcal{S}}}
\left(  1-\frac{1}{l}\right)  .
\]
One can prove that this product is bounded below using the fact ([KLS)%
\[
\sum_{l\in\mathcal{S}}\frac{1}{l}<\infty\text{.}%
\]
We should note that Fermat made the (incorrect) conjecture that $F_{n}$ is
always prime. If that were so then the product above would be
\[%
{\displaystyle\prod\limits_{n=0}^{\infty}}
\left(  1-\frac{1}{2^{2^{n}}+1}\right)  =\frac{1}{2}.
\]
The computed value for the above product is $.4997...$ so Fermat was almost right.
\end{enumerate}

\pagebreak

\begin{center}
{\LARGE References}
\end{center}

\noindent\lbrack GM]A.Grossmann and J. Morlet, Decomposition of Hardy
functions into square integrable wavelets of constant shape., SIAM J. Math.
Anal.,15 (1984),723-736

\noindent\lbrack KLS] Michal K\v{r}\'{\i}\v{z}ek,Florian Luca and Lawrence
Somer, On the convergence of series of reciprocals of primes related to the
Fermat numbers.(English summary) J. Number Theory 97 (2002), no. 1,
95--112.\smallskip

\noindent\lbrack M,R,R,S] C. Moore, D. Rockmore, A. Russell,L. Schullman, The
Power of Strong Fourier Sampling: Quantum Algorithms for Affine Groups and
Hidden Shifts, SIAMJournal on Computingf 37 (2007) 938-958,  SODA '04,
1113-1122.

\noindent\lbrack NC] Michael Nielsen and Isaac Chuang, Quantum Computation and
Quantum Information, Cambridge University Press, 2000.\smallskip

\noindent\lbrack Ro] H. E. Rose, A course in number theory,Second edition,
Oxford Science Publications,Oxford, 1994.\smallskip

\noindent\lbrack R] Barkley Rosser, Explicit bounds for some functions of
prime numbers, Amer. J. Math, 63(1941),211-232.

\noindent\lbrack HW] G. H. Hardy and E. M. Wright, An Introduction to the
Theory of Numbers,Fifth Edition,Oxford Science Publications,Oxford,
1979.\smallskip

\end{document}